\renewcommand\footnotetextcopyrightpermission[1]{} 
\newtheorem{theorem}{\bf Theorem}[section]
\newcommand{\para}[1]{\vspace{2pt}\noindent\textbf{#1}}
\titlespacing{\section}{0pt}{6.5pt}{6.5pt}
\titlespacing{\subsection}{0pt}{5pt}{5pt}
\newcounter{Lcount}
\newcommand{\squishlisttwo}{
\begin{list}{\arabic{Lcount}. }
{ \usecounter{Lcount}
\setlength{\itemsep}{0pt}
\setlength{\parsep}{0pt}
\setlength{\topsep}{0pt}
\setlength{\partopsep}{0pt}
\setlength{\leftmargin}{2em}
\setlength{\labelwidth}{1.5em}
\setlength{\labelsep}{0.5em} } }
\newcommand{\squishtwoend}{
\end{list} }
\newcommand{\squishlist}{
   \begin{list}{$\bullet$}
    { \setlength{\itemsep}{0pt}      \setlength{\parsep}{0pt}
      \setlength{\topsep}{3pt}       \setlength{\partopsep}{0pt}
      \setlength{\listparindent}{-2pt}
      \setlength{\itemindent}{-5pt}
      \setlength{\leftmargin}{0.5em} \setlength{\labelwidth}{0em}
      \setlength{\labelsep}{0.5em} } }
\newcommand{\squishend}{
    \end{list}  }
\definecolor{verylightgray}{rgb}{.97,.97,.97}
\lstdefinelanguage{Solidity}{
	keywords=[1]{anonymous, assembly, assert, balance, break, call, callcode, case, catch, class, constant, continue, contract, debugger, default, delegatecall, delete, do, else, emit, event, export, external, false, finally, for, function, gas, if, implements, import, in, indexed, instanceof, interface, internal, is, length, library, log0, log1, log2, log3, log4, memory, modifier, new, payable, pragma, private, protected, public, pure, push, require, return, returns, revert, selfdestruct, send, storage, struct, suicide, super, switch, then, this, throw, transfer, true, try, typeof, using, value, view, while, with, addmod, ecrecover, keccak256, mulmod, ripemd160, sha256, sha3}, 
	keywordstyle=[1]\color{blue}\bfseries,
	keywords=[2]{address, bool, byte, bytes, bytes1, bytes2, bytes3, bytes4, bytes5, bytes6, bytes7, bytes8, bytes9, bytes10, bytes11, bytes12, bytes13, bytes14, bytes15, bytes16, bytes17, bytes18, bytes19, bytes20, bytes21, bytes22, bytes23, bytes24, bytes25, bytes26, bytes27, bytes28, bytes29, bytes30, bytes31, bytes32, enum, int, int8, int16, int24, int32, int40, int48, int56, int64, int72, int80, int88, int96, int104, int112, int120, int128, int136, int144, int152, int160, int168, int176, int184, int192, int200, int208, int216, int224, int232, int240, int248, int256, mapping, string, uint, uint8, uint16, uint24, uint32, uint40, uint48, uint56, uint64, uint72, uint80, uint88, uint96, uint104, uint112, uint120, uint128, uint136, uint144, uint152, uint160, uint168, uint176, uint184, uint192, uint200, uint208, uint216, uint224, uint232, uint240, uint248, uint256, var, void, ether, finney, szabo, wei, days, hours, minutes, seconds, weeks, years},	
	keywordstyle=[2]\color{teal}\bfseries,
	keywords=[3]{block, blockhash, coinbase, difficulty, gaslimit, number, timestamp, msg, data, gas, sender, sig, value, now, tx, gasprice, origin},	
	keywordstyle=[3]\color{violet}\bfseries,
	identifierstyle=\color{black},
	sensitive=false,
	comment=[l]{//},
	morecomment=[s]{/*}{*/},
	commentstyle=\color{gray}\ttfamily,
	stringstyle=\color{red}\ttfamily,
	morestring=[b]',
	morestring=[b]"
}
  \providecommand\BibTeX{{%
    \normalfont B\kern-0.5em{\scshape i\kern-0.25em b}\kern-0.8em\TeX}}}
\begin{document}

\title{opML: Optimistic Machine Learning on Blockchain}

\author{KD Conway}
\affiliation{%
  \institution{Hyper Oracle}
}
\email{0x1cc@hyperoracle.io}

\author{Cathie So}
\affiliation{%
  \institution{Hyper Oracle}
}
\email{cathie@hyperoracle.io}

\author{Xiaohang Yu}
\affiliation{%
  \institution{Hyper Oracle}
}
\email{nom4dv3@hyperoracle.io}

\author{Kartin Wong}
\affiliation{%
  \institution{Hyper Oracle}
}
\email{kartin@hyperoracle.io}


\begin{abstract}
The integration of machine learning with blockchain technology has witnessed increasing interest, driven by the vision of decentralized, secure, and transparent AI services. In this context, we introduce opML (Optimistic Machine Learning on chain), an innovative approach that empowers blockchain systems to conduct AI model inference. 
opML lies a interactive fraud proof protocol, reminiscent of the optimistic rollup systems. This mechanism ensures decentralized and verifiable consensus for ML services, enhancing trust and transparency. 
Unlike zkML (Zero-Knowledge Machine Learning), opML offers cost-efficient and highly efficient ML services, with minimal participation requirements. Remarkably, opML enables the execution of extensive language models, such as 7B-LLaMA, on standard PCs without GPUs, significantly expanding accessibility.
By combining the capabilities of blockchain and AI through opML, we embark on a transformative journey toward accessible, secure, and efficient on-chain machine learning.

\end{abstract}

\keywords{Blockchain, Machine Learning, Fraud Proof, Rollup, Layer 2}


\makeatletter \gdef\@ACM@checkaffil{} \makeatother 
\maketitle
\pagestyle{plain}

\section{Introduction}

In the rapidly evolving digital landscape, technological innovations are continually reshaping the way we interact with and harness the power of information. Among these innovations, the convergence of two remarkable forces, Artificial Intelligence (AI) and blockchain technology, stands out as a pivotal development. AI, with its capacity for advanced data analysis and decision-making, and blockchain, a decentralized ledger known for its security and transparency, have joined forces to explore new frontiers in the digital realm \cite{salah2019blockchain,dinh2018ai,yang2022fusing}.
As two pioneering forces, each with its distinct capabilities, AI and blockchain, are now merging to redefine the boundaries of what's possible in the digital world. This synergy has given rise to the concept of "Onchain AI" \cite{salah2019blockchain,dinh2018ai,yang2022fusing}, a paradigm that holds the promise of delivering decentralized, secure, and efficient AI services directly within the blockchain network.

However, a prevalent challenge in the current landscape of "Onchain AI" is the infeasibility of conducting AI computations directly on chain \cite{li2022smartvm}. 
For example, a simple task of naïve matrix multiplication of $1000 \times 1000$ integers would cost over 3 billion gas \cite{zheng2021agatha}, which far exceeds the current Ethereum’s block gas limit \cite{wood2014ethereum}.
Consequently, most of these applications resort to off-chain computations on centralized servers, only uploading the results onto the blockchain. 
While this strategy may yield functional results, it inherently sacrifices decentralization. Such a trade-off not only poses significant security challenges but also diminishes the core principles of trust and transparency that blockchain technology aims to uphold.

One alternative approach is to leverage Zero-Knowledge Machine Learning (zkML) \cite{zhang2020zero,weng2021mystique}. zkML represents a novel paradigm in the integration of machine learning and blockchain. 
zkML's reliance on zk-SNARKs (Zero-Knowledge Succinct Non-Interactive Arguments of Knowledge) \cite{goldreich1994definitions,ben2019scalable,ben2014succinct,bootle2016efficient} has been pivotal in safeguarding confidential model parameters and user data during the training and inference processes. This not only mitigates privacy concerns but also reduces the computational burden on the blockchain network, making zkML a promising candidate for decentralized ML applications.

While zkML undeniably presents a range of advantages in enhancing privacy and security within machine learning on the blockchain, it is crucial to acknowledge its inherent limitations. One of the most prominent challenges is the high cost associated with proof generation in zkML. The process demands considerable computational resources, resulting in extended generation times and substantial memory consumption. Consequently, zkML is most suitable for relatively small models, as its inefficiency becomes apparent when handling larger and more complex models. 
zkML would take more than 1000 times the memory and computation consumption for the ZK proof generation \cite{zkmodulus}.
As a result, zkML may prove impractical for extensive AI applications that require the processing of substantial datasets and intricate model parameters.

Faced with the limitations of zkML, we turn to investigate using fraud proof to prove the correctness of the ML results on chain instead of using the zero-knowledge proof (ZKP, also known as validity proof). 
Fraud proofs are commonly used in blockchain systems, including rollup systems, which are part of a broader category known as optimistic systems. Notable examples of rollup systems employing fraud proofs include Optimism \cite{optimism} and Arbitrum \cite{kalodner2018arbitrum}.
In the system using fraud proof, we optimistically assume that every proposed result is valid by default. In the case of mitigating potentially invalid results, the system using fraud proof will introduce a challenge period during which participants may challenge the submitter. The fraud proof is generated via interactive pinpoint protocol proving that the provided result is incorrect. The arbitration process is designed to validate a fraud proof by executing only a few minor computation steps, making the on-chain cost extremely low. 

Building upon the foundation of optimistic system design, we introduce opML: Optimistic Machine Learning on the blockchain\footnote{opML is open sourced. \url{https://github.com/hyperoracle/opml}}. Diverging from the approach of zkML, which relies on zero-knowledge proofs, opML adopts a fraud-proof system to guarantee the correctness of ML results. In the opML framework, submitters can execute ML computations within a native environment and subsequently provide the results directly on the blockchain. This approach maintains an optimistic assumption that each proposed result is inherently valid. During the challenge process, validators will check the correctness of these submitted result. 
If the result is invalid, he will start the dispute game (bisection protocol) with the submitter and tries to disprove the claim by pinpointing one concrete erroneous step. Finally, arbitration about a single step will be conducted on smart contract.

\begin{table}[!th]
\begin{tabular}{|l|l|l|}
\hline
\textbf{}      & \textbf{opML}                           & \textbf{zkML}               \\ \hline
Model size     & any size                                & small/limited               \\ \hline
Proof & fraud proof                             & validity proof  \\ \hline
Requirement    & Any PC with CPU/GPU                     & Large memory for proof generation \\ \hline
Finality       & Delay for challenge period              & No delays                   \\ \hline
Service cost   & low                                    & extremely high              \\ \hline
Security       & crypto-economic security (AnyTrust) & cryptographic security      \\ \hline
\end{tabular}
\caption{Comparison between opML and zkML}
\label{table:cmp_zk_op}
\end{table}

\paragraph{opML v.s. zkML.}
The comparison of opML and zkML is shown in Table \ref{table:cmp_zk_op}. Generally speaking, opML and zkML are different from the following perspectives:

\begin{itemize}
    \item \textbf{Proof System}: opML adopts the fraud proof, while zkML adopts the zero-knowledge proof to prove the correctness of the ML results.
    \item \textbf{Performance}: The performance is considered in the two perspectives: The proof generation time, and the memory consumption.
    \begin{itemize}
        \item The proof time of zkML is long, and significantly grows with increasing model parameters\footnote{An example of a significantly long proving time of zkML is running a nanoGPT model inference with 1 million parameters inside zkML framework takes around 80 minutes of proving time \cite{nanoGPT}.}. But for opML, computing the fraud proof can be conducted in the nearly-native environment in a short time.
        \item The memory consumption of zkML is extremely large compared to the native execution. But opML can be conducted in the nearly-native environment with the nearly-native memory consumption. Specifically, opML can handle a 7B-LLaMA model (the model size is around 26GB) within 32GB memory. But the memory consumption for the circuit in zkML for 7B-LLaMA model may be on the order of TB or even PB levels, making it totally impractical.
    \end{itemize}
    Limited by the long proof generation time and large memory consumption, zkML can only handle some small AI models with an extremely high service cost. While opML can provide proof in the native execution, and thus it is more cost-efficient and practical for some large models.
    \item \textbf{Security}: zkML can make ML onchain with the highest level of security possible, but is limited by proof generation time and other constraints. opML can make ML onchain with flexibility and performance, but may not be as secure as zkML because it is secured by economics, not by mathematics and cryptography like ZK. The security model of opML is similar to the current optimistic rollup systems. With the fraud proof system, opML can provide an any-trust guarantee: any one honest validator can force opML to behave correctly. Under the AnyTrust assumption, opML can maintain correctness and liveness.
    \item \textbf{Finality}: We can define the finalized point of zkML and opML as follows:
    \begin{itemize}
        \item zkML: zk proof of ML inference is generated (and verified)
        \item opML: challenge period of ML inference has ended
    \end{itemize}
    Though opML needs to wait for the challenge period, zkML also needs to wait for the generation of ZKP. When the proving time of zkML is longer than the challenge period of opML (it happens when ML model is large), opML can reach finality even faster than zkML.
\end{itemize}


The contribution of this work is summarized as follows:
\begin{itemize}
    \item We propose opML: optimistic machine learning on blockchain. 
    Compared to zkML, opML can provide ML service with low cost and high efficiency.
    \item We propose a novel multi-phase fraud proof protocol with semi-native execution and lazy loading design, which can greatly improve the efficiency of the fraud proof protocol and overcome the memory limitation in the fraud proof VM.
\end{itemize}

The organization of this paper goes as follows. In Section \ref{sec:arch}, we introduce the architecture and the workflow of opML. 
In Section \ref{sec:fpvm}, we introduce the fraud proof virtual machine (FPVM) in opML. In Section \ref{sec:ml_engine}, we design a high-efficient ML engine for both native execution and fruad proof. In Section \ref{sec:dispute}, we demonstrate the dispute game in opML.
In Section \ref{sec:multi}, we introduce the multi-phase dispute game in opML, which can help to overcome the limitations of the current fraud proof system. 
In Section \ref{sec:security}, we show that opML can achieve safety and liveness under the AnyTrust assumption.
In Section \ref{sec:incentive}, we investigate the incentive mechanism in opML to resolve the verifier dilemma.
In Section \ref{sec:dis}, we discuss the potential optimization and applications on opML. Section \ref{sec:relate} reviews the related work. Section \ref{sec:conclusion} concludes this paper with a final remark. 

\section{Overview}\label{sec:arch}
\begin{figure}
    \centering
    \includegraphics[width=0.9\textwidth]{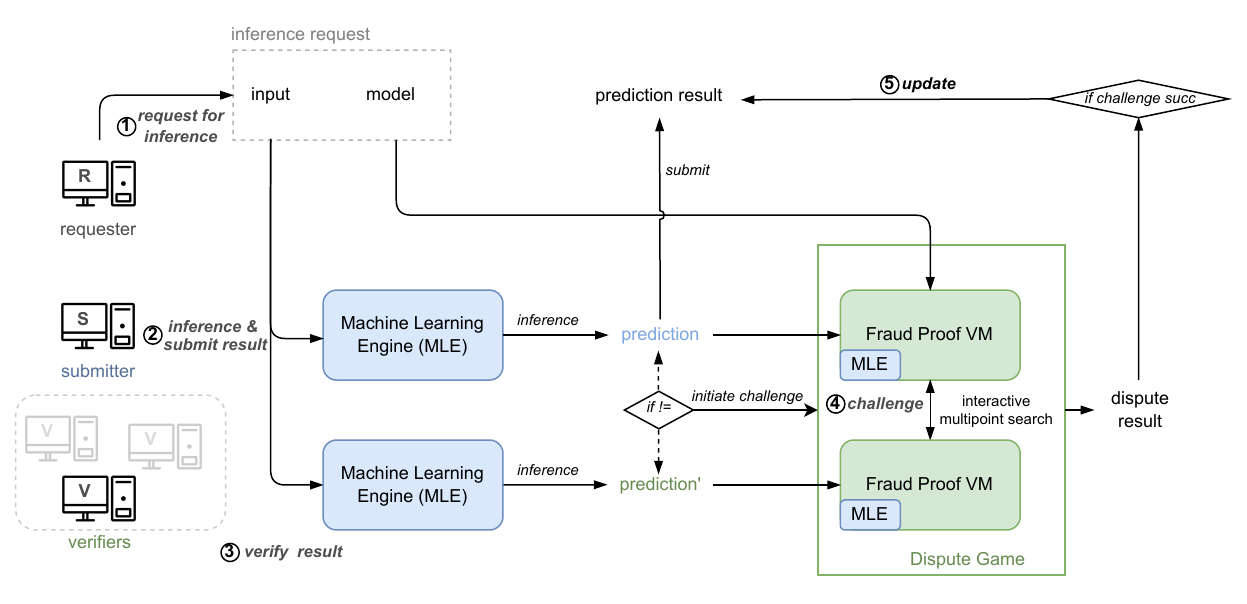}
    \caption{OPML Overview}
    \label{fig:overview}
\end{figure}
\subsection{Design Principles}

The design principles of opML are listed as follows:
\begin{itemize}
    \item \textit{Deterministic ML Execution}: The native ML execution may result in different execution results due to the randomness and computation on floating-point numbers. opML adopts the fixed-point arithmetic and the software-based floating point libraries to guarantee the consistency and determinism of the ML execution. Therefore, we can use a deterministic state transition function to represent the ML execution process.
    \item \textit{Separate Execution from Proving}: opML takes the same source code and compiles it twice. One is compiled for native execution, which is optimized for speed. And we allow using multi-thread CPU and GPU to speed up the native execution. The same code is also compiled to fraud proof VM instruction\footnote{In the current implementation, we adopts MIPS as the fraud proof VM. We are also exploring other efficient fraud proof VM such as WASM, RISC-V, etc.}, which will be used in the fraud proof protocol.
    This dual-target approach assures that execution is fast, while proving is based on the machine-independent code.
    \item \textit{Optimistic Machine Learning with Interactive Fraud Proofs}: opML adopts the interactive fraud proof, which will bisect a dispute down to a single instruction, and resolve the base-case using the on-chain fraud proof VM.
    \item \textit{Optimizing ML fraud proof with Multi-Phase Protocol}: The existing fraud proof systems that are widely adopted in the optimistic rollup systems need to cross-compile the whole computation into fraud proof VM instructions, which will result in inefficient execution and huge memory consumption. opML proposes a novel multi-phase protocol, which allows semi-native execution and lazy loading, which greatly speeds up the fraud proof process.
\end{itemize}

\subsection{Architecture}

opML relies on a fraud-proof mechanism to ensure the accuracy of machine learning results on the chain. The fraud-proof in opML comprises three integral components:
\begin{itemize}
    \item \textbf{Fraud Proof Virtual Machine (FPVM)}. Given a stateless program and its inputs, FPVM can trace any instruction step and prove it on layer 1 blockchain (L1).
    \item \textbf{Machine Learning Engine}. The highly efficient machine learning engine is designed to cater to both native execution and fraud-proof scenarios. This engine not only ensures a swift and accurate execution of machine learning tasks but also guarantees the consistency and determinism of the results. 
    \item \textbf{Interactive Dispute Game}. The dispute game will bisect a dispute down to a single instruction, and resolve the disputed one-step instruction using on-chain FPVM.
\end{itemize}

\subsection{Workflow}


In opML, the requester first initiates a task, and upon the server's completion, results are committed on-chain. Verifiers (Challengers) check these results, and if a dispute arises, the bisection protocol is initiated. This dispute resolution process aims to pinpoint and address any identified erroneous steps. Finally, smart contract arbitration concludes the workflow by decisively settling any disputes related to a single computation step. The opML workflow unfolds as follows:

\begin{enumerate}
    \item The requester first initiates an ML service task.
    \item The submitter then performs the ML service task and commits the result on chain.
    \item The verifier (challenger) will validate the results. If a verifier deems the results incorrect, he will initiate the dispute game (bisection protocol) with the server. The goal is to disprove the claim by pinpointing a specific erroneous step.
    \item The smart contract conducts arbitration concerning the disputed step, providing a conclusive resolution to the dispute.
    \item Finally, after a defined ``challenge period'', the results will be confirmed. 
\end{enumerate}

Both the server (submitter) and the verifier (challenger) need to stake in the system, and providing a incorrect result will lead to loss of the stake. Thus, if all parties follow their incentives, only the valid result will be committed.
Therefore, in the common case, the server (submitter) will always provide correct results, and the verifiers (challengers) will always find the results valid, and the dispute game would not happen. After a defined ``challenge period'', the results will be confirmed. 

\section{Fraud Proof Virtual Machine}\label{sec:fpvm}

\begin{figure}
    \centering
    \includegraphics[width=0.9\textwidth]{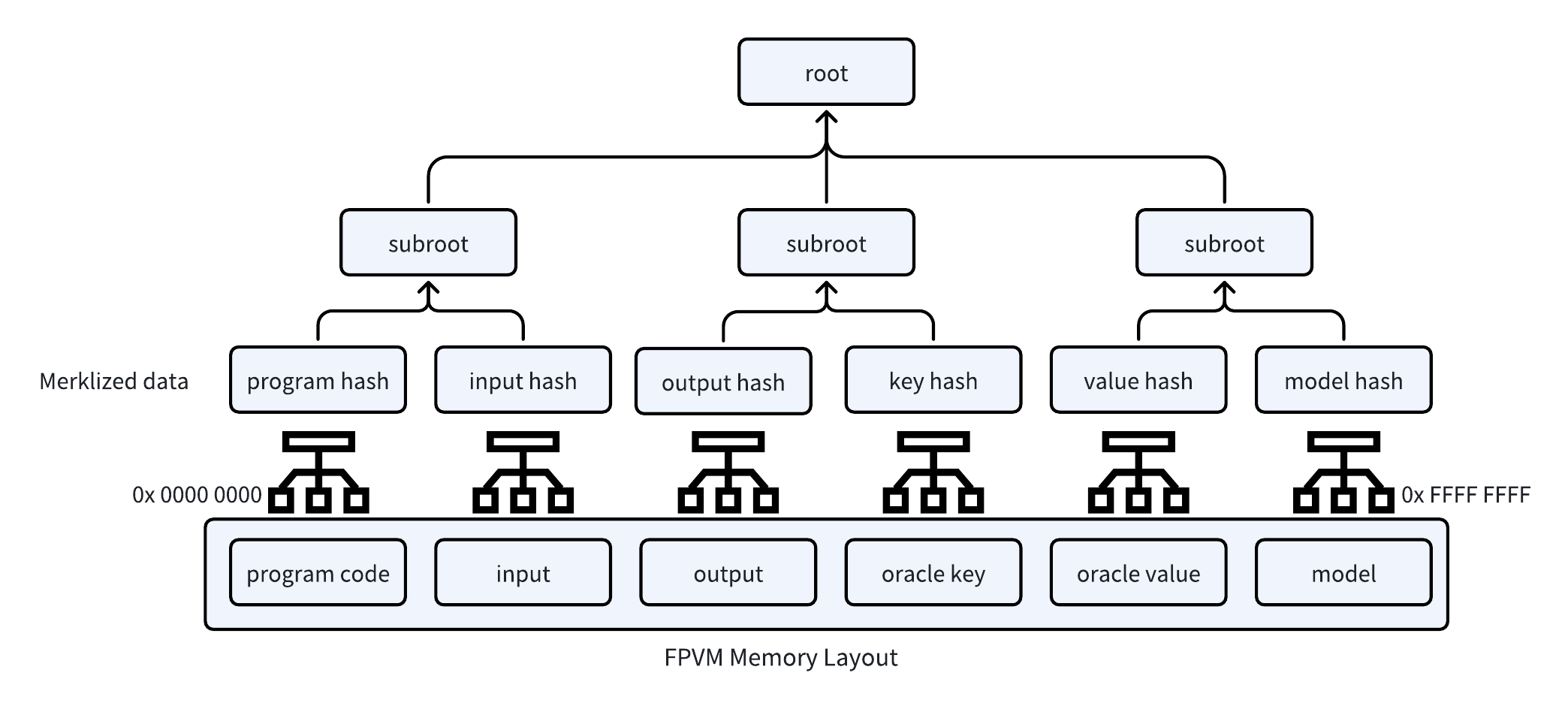}
    \caption{Fraud Proof VM Memory Layout}
    \label{fig:fpvm_memory}
\end{figure}

We build a fraud proof virtual machine (FPVM) for offchain execution and onchain arbitration. We guarantee the equivalence of the offchain VM and the onchain VM implemented on a smart contract.
Operationally, the FPVM is a state transition function. This state transition is referred to as a \textit{Step}, that executes a single instruction in the FPVM. Given an input state $S_{pre}$ steps on a single instruction encoded in the state, the FPVM, as a state transition function (STF) $VM(\cdot)$, will produce a new state $S_{post}$, i.e., $VM(S_{pre}) \rightarrow S_{post}$. 
Thus, the trace of a program executed by the FPVM is an ordered set of VM states, which highlights the effects of running a machine learning program on the FPVM. 
The execution trace $T$ is a sequence $(S_0,S_1,S_2, \cdots, S_n)$, where each $S_i$ is a VM state, and  $S_{i} = VM(S_{i-1}), \forall i \in [1,n]$. Each execution trace has an unique initial state $S_0$, which is determined by the ML model and input.

The VM state is managed with a Merkle tree \cite{mt}, only the Merkle root will be uploaded to the onchain smart contract. The Merkle root represents for the VM state.
As shown in Figure \ref{fig:fpvm_memory}, the memory in FPVM is divided into several areas, namely, ``program code'', ``input'', ``output'', ``oracle key'', ``oracle value'' and ``model''. The machine learning program is placed in the ``program'' field; and the input for the ML model is placed in the ``input'' field; the ML model is placed in the ``model'' field; the results of opML will be placed in the ``output'' field at the end of the execution of the ML program. There is a key-value oracle in fraud proof system, which is used by FPVM to read external data during its state transition. The oracle key and value will be placed in the ``oracle key'' field and ``oracle value'' field correspondingly.
Besides, the memory in FPVM is represented as a merkle tree. The tree has a fixed-depth of 27 levels, with leaf values of 32 bytes each. This spans the full 32-bit address space, where each leaf contains the memory at that part of the tree. The merkle root of the tree reflects the effects of memory writes in FPVM. With this merkle tree structure, each field in the FPVM memory can be represented as a merkle subtree root.

\section{Machine Learning Engine}\label{sec:ml_engine}

In opML, a highly efficient machine learning engine is designed to cater to both native execution and fraud-proof scenarios. This engine not only ensures a swift and accurate execution of machine learning tasks but also guarantees the consistency and determinism of the results. This becomes particularly crucial in the dispute resolution process, where the machine learning engine can statelessly provide a valid output and verify the dispute, reinforcing the reliability of the overall system.

\subsection{Separate Execution from Proving}

Following the design principle ``Separate Execution from Proving'', we design a high efficient machine learning engine for opML. We provide two kinds of implementation of the machein learning engine. One is compiled for native execution, which is optimized for high speed. And we use multi-thread CPU and GPU to speed up the native execution. Another is compiled to a fraud proof program for FPVM.
This dual-target approach assures that the execution is fast, while proving is based on the machine-independent code.

Consider the matrix multiplication in the machine learning engine, illustrated in Figure \ref{fig:ml:GPU}. Native execution employs GPU calculation (CUDA calculator \cite{sanders2010cuda}) for acceleration. As shown in Figure \ref{fig:ml:FPVM}, for the proving phase, the machine learning engine is compiled into machine-independent FPVM instructions, as CUDA calculator compatibility is lacking. The two implementations ensure consistent execution results.


\begin{figure}[!h]
	\begin{footnotesize}
		\begin{lstlisting}[language=C++]
// a,b,c are GPU device pointers to matrix
void gpu_matrix_mult(int *a,int *b, int *c, int m, int n, int k) { 
    int row = blockIdx.y * blockDim.y + threadIdx.y; 
    int col = blockIdx.x * blockDim.x + threadIdx.x;
    int sum = 0;
    for(int i = 0; i < n; i++) {
        sum += a[row * n + i] * b[i * k + col];
    }
    c[row * k + col] = sum;
} 
		\end{lstlisting}
	\end{footnotesize}
	\caption{Matrix multiplication in ML engine using GPU for native execution}
	\label{fig:ml:GPU}
\end{figure}

\begin{figure}[!h]
	\begin{footnotesize}
		\begin{lstlisting}[language=C++]
void fpvm_matrix_mult(int *a, int *b, int *c, int m, int n, int k) {
    for (int i = 0; i < m; ++i) {
        for (int j = 0; j < k; ++j) {
            int smu = 0;
            for (int h = 0; h < n; ++h) {
                smu += a[i * n + h] * b[h * k + j];
            }
            c[i * k + j] = smu;
        }
    }
}
		\end{lstlisting}
	\end{footnotesize}
	\caption{Matrix multiplication in ML engine to machine-independent code for FPVM}
	\label{fig:ml:FPVM}
\end{figure}

In the current implementation, to ensure the efficiency of AI model computation in the FPVM, we have implemented a lightweight machine learning engine specifically designed for this purpose instead of relying on popular ML frameworks like Tensorflow \cite{abadi2016tensorflow} or PyTorch \cite{paszke2019pytorch}. Additionally, a script that can convert Tensorflow and PyTorch models to this lightweight library is provided. With the provided script, we can train a model on popular ML frameworks like Tensorflow or PyTorch, and then convert it to opML model. 

\subsection{Consistency and Determinism}\label{sec:consistency}



The inconsistency in ML results can be attributed to two factors: randomness and variability in floating-point computations. To address randomness, fixing the random seed in the random number generator is a common practice since computer-generated randomness is essentially pseudo-randomness. Regarding the inconsistency in floating-point computations, during the native execution of DNN computations, especially across diverse hardware platforms, differences in execution results may arise due to the nature of floating-point numbers. 
For instance, parallel computations involving floating-point numbers, such as $(a + b) + c$ versus $a + (b + c)$, often yield non-identical outcomes due to rounding errors \cite{goldberg1991every}. Additionally, factors such as programming language, compiler version, and operating system can influence the computed results of floating-point numbers, leading to further inconsistency in ML results \cite{goldberg1991every}.

To tackle these challenges and guarantee the consistency of opML, we employ two key approaches:

\begin{itemize}
    \item Fixed-point arithmetic, also known as quantization technology \cite{yang2019quantization}, is adopted. This technique enables us to represent and perform computations using fixed precision rather than floating-point numbers. By doing so, we mitigate the effects of floating-point rounding errors, leading to more reliable and consistent results. It's important to note that using fixed-point arithmetic may result in a minimal loss of accuracy in DNN models. This tradeoff between execution performance and model accuracy is a key consideration when adopting such precision techniques.
    \item We leverage software-based floating-point libraries (softfloat) \cite{bsoft} that are designed to function consistently across different platforms. These libraries ensure cross-platform consistency and determinism of the ML results, regardless of the underlying hardware or software configurations.
\end{itemize}

By combining fixed-point arithmetic \cite{yang2019quantization} and software-based floating-point libraries (softfloat) \cite{bsoft}, we establish a robust foundation for achieving consistent and reliable ML results within the opML framework. 

\section{Interactive Dispute Game}\label{sec:dispute}


In the interactive dispute game, two or more parties (with at least one honest party) are assumed to execute the same program. Then, the parties can challenge each other with a pinpoint style to locate the disputed step. The step is sent to the smart contract on blockchain for arbitration.



\subsection{Locating the Disputed Point}

\begin{figure}
    \centering
    \includegraphics[width=0.8\textwidth]{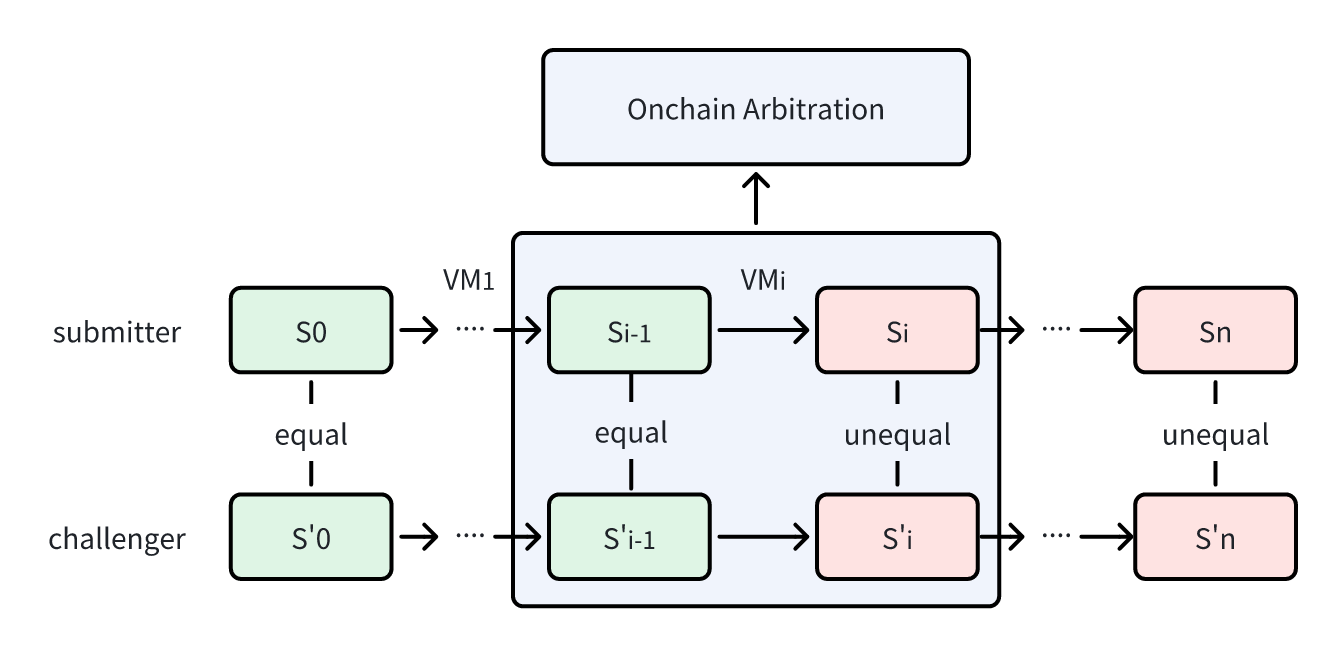}
    \caption{Dispute Game in opML}
    \label{fig:pinpoint}
\end{figure}

At the beginning, the submitter and the verifier agree on the initial state $S_0$, but disagree with the final state, i.e., $S_n \neq$ $S'_n$.
The objective of this protocol, as illustrated in Figure \ref{fig:pinpoint}, is to identify a specific VM$_k$ in a sequence of VM instructions executed within the context of a VM state $S_0$ (with $n$ such instructions in total), wherein $S_{k-1} =$ $S'_{k-1}$ but $S_k \neq$ $S'_k$.
 
The dispute game happens in a sequence of rounds. At the beginning of each round, the submitter and the verifier agree on a start VM state $S_i$ and they disagree on the end state $S_{i+j}$, for some $j>1$. Initially, $i=0$ and $j=n$. Then the challenger is now required to claim what state $S_m$ is at the midpoint VM state, where $m = i + \lfloor \frac{j}{2} \rfloor$. Next, the submitter must say whether he agrees or disagrees with the submitter's claimed midpoint state $S_m$. Now there are two cases:
\begin{itemize}
    \item If the submitter agrees with the challenger's midpoint state, the protocol has identified a smaller dispute: the challenger and the submitter agree on $S_m$, where $m = i + \lfloor \frac{j}{2} \rfloor$, but disagree with $S_{i+j}$. In the next round, we will let $i \leftarrow i + \lfloor \frac{j}{2} \rfloor$ and $j \leftarrow N - \lfloor \frac{j}{2} \rfloor$.
    \item If the submitter disagrees with the challenger's midpoint state, the protocol also has identified a smaller dispute: the challenger and the submitter agree on $S_i$ but disagree with $S_m$, where $m = i + \lfloor \frac{j}{2} \rfloor$. In the next round, we will let $i \leftarrow i$ and $j \leftarrow \lfloor \frac{j}{2} \rfloor$.
\end{itemize}
In both cases, the length of the dispute will be cut in half. The same procedure is repeated, cutting repeatly in half, until $j = 1$. 


The dispute game exhibits remarkable efficiency. In terms of time complexity, both the verifier and the submitter require only $\lceil \log_2 n \rceil$ rounds of challenge-response to converge on the same value of $k$, wherein $S_{k-1} =$ $S'_{k-1}$ but $S_k \neq$ $S'_k$. 
This synchronization between the submitter and the verifier is facilitated through a challenge-response mechanism featuring a timeout penalty \cite{kalodner2018arbitrum, truebit}.
The state $S_{k-1}$, along with auxiliary data, will be subsequently forwarded to a smart contract for arbitration.

\subsection{Onchain Arbitration}

For on-chain arbitration, the submitter and the verifier will send (VMI$_k$, S$_{k-1}$, S$_k$) to contract for arbitration. The onchain VM will take S$_{k-1}$ as the input and conduct one-step execution to output the correct S$_k$. 
Since the onchain VM only execute one instruction, the size of the data (witness) that needs to access is only $O(1)$.
For onchain arbitration, the witness consists of a partial expansion of the Merkle tree representing the before state $S_{k-1}$. The onchain VM uses the partially expended state tree to read the next instruction, emulate the instruction execution, and then compute the Merkle root hash of the resulting state. Note that the one-step onchain VM execution always takes only $O(1)$ computation and memory consumption, so that it is always possible to conduct the onchain arbitration using a feasible amount of Ethereum gas.


If the challenger produces a valid one-step proof in the onchain arbitration, he wins the challenge. Otherwise, the submitter wins the challenge.

\section{Multi-Phase Dispute Game}\label{sec:multi}

\subsection{Limitations of One-Phase Dispute Game}

With the design principle of ``Separate Execution from Proving'', we can achieve a high performance in the common optimistic case, where the submitter will always provide a correct result. However, to prevent the malicious behavior in the pessimistic case, we still need to generate a fraud proof for opML. In the fraud proof protocol of opML, we cross-compile the ML computation into fraud proof VM instructions and then start the dispute game to find the dispute step. However, cross-compiling the whole ML computation into fraud proof VM instructions has significant limitations:
 
\begin{itemize}
    \item \textbf{Low Execution Efficiency}: The one-phase dispute game has a critical drawback: for proving, all computations must be executed within the FPVM, preventing us from leveraging the full potential of GPU/TPU acceleration or parallel processing. Consequently, this restriction severely hampers the efficiency of providing fraud proof for large model inference, which also aligns with the current limitation of the referred delegation of the computation (RDoC) protocol \cite{canetti2011practical}.
    \item \textbf{Limited Memory in Fraud Proof VM}: The fraud proof VM has limited memory, for example, the MIPS VM can only support at most 4 GB memory. Due to the limited memory size, we can not load a large model into the fraud proof VM directly. For example, the size of a 7B-llama model in float64 is around 26GB, which can not be stored in the memory of the MIPS FPVM.
\end{itemize}

\subsection{Overview of Multi-Phase Protocol}

To address the constraints imposed by the one-phase protocol and ensure that opML can generate fraud proof in a performance level comparable to the native environment, we propose a multi-phase protocol. The multi-phase dispute game has the following properties that can well resolve the limitations above:

\begin{itemize}
    \item \textbf{Semi-Native Execution}: With the multi-phase design, we only need to conduct the computation in the VM only in the final phase, resembling the single-phase protocol. For other phases, we have the flexibility to perform computations that lead to state transitions in the native environment, leveraging the capabilities of parallel processing in CPU, GPU, or even TPU. By reducing the reliance on the VM, we significantly minimize overhead, resulting in a remarkable enhancement in the execution performance of opML, almost akin to that of the native environment.
    \item \textbf{Lazy Loading Design}: To optimize the memory usage and performance of the fraud proof VM, we implement a lazy loading technique. This means that we do not load all the data into the VM memory at once, but only the keys that identify each data item. When the VM needs to access a specific data item, it uses the key to fetch it from the external source and load it into the memory. Once the data item is no longer needed, it is swapped out of the memory to free up space for other data items. This way, we can handle large amounts of data without exceeding the memory capacity or compromising the efficiency of the VM.
\end{itemize}

\begin{figure}
    \centering
    \includegraphics[width=0.8\textwidth]{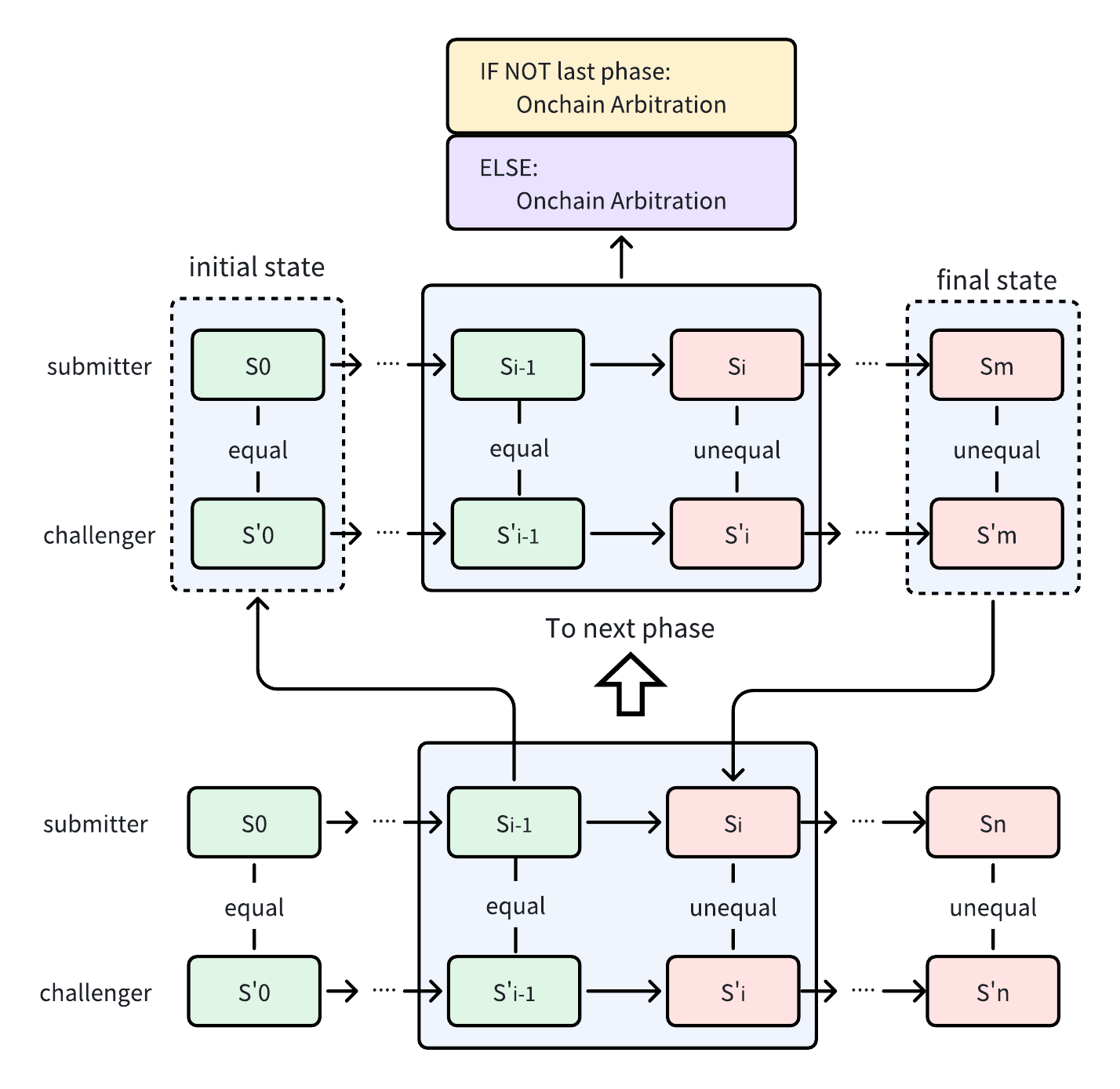}
    \caption{Multi-Phase Dispute Game}
    \label{fig:multi_phase}
\end{figure}

Figure \ref{fig:multi_phase} demonstrates a verification game consisting of two phases (k = 2). In Phase-2, the process resembles that of a single-phase verification game, where each state transition corresponds to a single VM micro-instruction that changes the VM state. In Phase-1, the state transition corresponds to a "Large Instruction" encompassing multiple micro-instructions that change the computation context.
The submitter and challenger will first start the dispute game on Phase-1 using a bisection protocol to locate the dispute step on a "large instruction". This step will be send to the next phase, Phase-2. Phase-2 works like the single-phase dispute game. The bisection protocol in Phase-2 will help to locate the dispute step on a VM micro-instruction. This step will be sent to the arbitration contract on the blockchain.

\subsection{State Transition to Next Phase}

To ensure the integrity and security of the transition to the next phase, we rely on the Merkle tree. As shown in Figure \ref{fig:multi_transition}, this operation involves extracting a Merkle sub-tree reconstruction, thus guaranteeing the seamless continuation of the dispute game process. Take the state transition in two-phase protocol as an example. On the phase-1, assume that the submitter and the challenger have already located the dispute step on a "large instruction", i.e., $S^k_{i-1} = S'^{k}_{i-1} \land S^k_{i} \neq S'^{k}_{i}$. Then we will build a Merkle tree on the state data $S^k_{i-1}$ to get the Merkle tree root $root(S^k_{i-1})$. We will construct the initial state in the next phase $S^{k+1}_0$ using the Merkle root. Specially, we need an empty VM image first, initialize it with the running program in the ``program code'' memory field in the FPVM, and then we will set the ``input'' memory field in FPVM as $root(S^k_{i-1})$, which will serves as a key for lazy loading. After completing initialization and filling in data, we can get the initial state in the next phase $S^k_{i-1} \Rightarrow S^{k+1}_0$. Due to the complexity of construction $S^{k+1}_0$ and the gas limitation of Ethereum, we build a zero-knowledge circuit to correctness of the construction of $S^{k+1}_0$ and verify the zero-knowledge proof on chain.
Similarly, at the end of the execution in phase 2, we need to check the consistency of the submitter's state $S^k_i \Leftrightarrow S^{k+1}_n$ and the challenger's state $S'^{k}_i \Leftrightarrow S'^{k+1}_n$. Specially, the state $S^k_i$ is stored as the ``output'' field of the FPVM memory, and we can prove it by providing a Merkle proof. 

\begin{figure}
    \centering
    \includegraphics[width=0.9\textwidth]{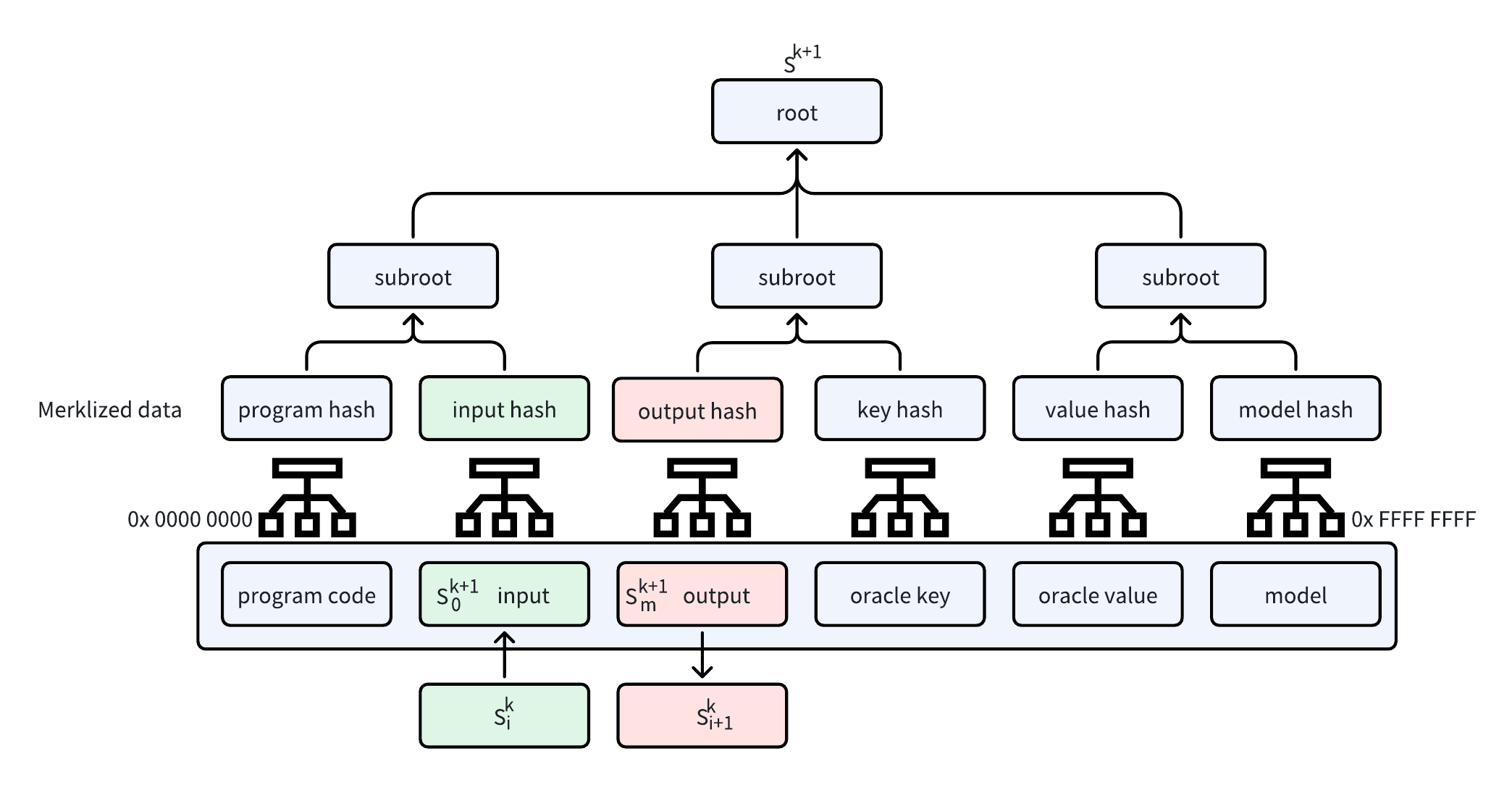}
    \caption{State Transition in Multi-Phase Dispute Game}
    \label{fig:multi_transition}
\end{figure}



\subsection{DNN Computation in Multi-Phase opML}

In this demonstration, we present a DNN computation in a two-phase opML approach:

\begin{itemize}
    \item The computation process of Machine Learning, specifically Deep Neural Networks (DNN), can be represented as a computation graph denoted as $G$. This graph consists of various computation nodes, capable of storing intermediate computation results.
    \item DNN model inference is essentially a computation process on the aforementioned computation graph. The entire graph can be considered as the inference state (computation context in Phase-1). As each node is computed, the results are stored within that node, thereby advancing the computation graph to its next state.
    \item Therefore, we can first conduct the dispute game on the computation graph (at Phase-1). On the phase-1 dispute game, the computation on nodes of the graph can be conducted in native environment using multi-thread CPU or GPU. The bisection protocol will help to locate the dispute node, and the computation of this node will be sent to the next phase (Phase-2) bisection protocol.
    \item In Phase-2 bisection, we transform the computation of a single node into Virtual Machine (VM) instructions, similar to what is done in the single-phase protocol.
\end{itemize}

It is worth noting that we anticipate introducing a multi-phase opML approach (comprising more than two phases) when the computation on a single node within the computation graph remains computationally complex. This extension will further enhance the overall efficiency and effectiveness of the fraud poof protocol.

\subsection{Performance Improvement}

In this section, we present a brief discussion and analysis of our proposed multi-phase fraud proof framework.

Suppose there are $n$ nodes in the DNN computation graph, and each node needs to take $m$ VM micro-instructions to complete the calculation in VM. Assuming that the speedup ratio on the calculation on each node using GPU or parallel computing is $\alpha$. This ratio represents the acceleration achieved through GPU or parallel computing and can reach significant values, often ranging from tens to even hundreds of times faster than the single-thread VM execution.
Based on these considerations, we draw the following conclusions:

\begin{itemize}
    \item \textbf{Performance}: Two-phase opML outperforms single-phase opML, achieving a computation speedup of $\alpha$ times. The utilization of multi-phase verification enables us to take advantage of the accelerated computation capabilities offered by GPU or parallel processing, leading to substantial gains in overall performance.
    \item \textbf{Space Complexity}: Two-phase opML reduces space complexity of the Merkle tree. When comparing the space complexity of the Merkle trees, we find that in two-phase opML, the size is $O(m + n)$, whereas in single-phase opML, the space complexity is significantly larger at $O(mn)$. The reduction in space complexity of the Merkle tree further highlights the efficiency and scalability of the multi-phase design.
\end{itemize}

In summary, the multi-phase fraud proof framework presents a remarkable performance improvement, ensuring more efficient and expedited computations, particularly when leveraging the speedup capabilities of GPU or parallel processing. Additionally, the reduced Merkle tree size adds to the system's effectiveness and scalability, making multi-phase opML a compelling choice for various applications.
\section{Security Analysis}\label{sec:security}

In this section, we provide the security analysis of our system. 
For simplicity, we use the AnyTrust assumption from Arbitrum \cite{kalodner2018arbitrum}.

\para{The AnyTrust assumption.} AnyTrust, rather than ``majority trust'', assumes that, for every claim, there is at least one honest node. Namely, either the submitter is honest, or at least one verifier is honest and will challenge within the pre-defined period. Suppose there are $m$ verifiers, $m-1$ of whom collude to stay silent about a submitter's wrong claim. Under AnyTrust, the only honest verifier will still succeed in disproving the wrong claim in front of the smart contract, and the wrong claim is thus rejected. Like the previous work \cite{kalodner2018arbitrum, truebit, zheng2021agatha}, we also assume data availability  
and anti-censorship (i.e., every verifier can always interact with the contract). They are solved by orthogonal countermeasures \cite{ipfs}.

\para{Safety and Liveness.}
We can show that opML can maintain safety and liveness under the AnyTrust assumption. 
\begin{itemize}
    \item \textbf{Safety}: \emph{The safety of opML under the AnyTrust assumption means that any one honest validator can force opML to behave correctly.} Suppose there exists only one honest node, and all other nodes are malicious. When a malicious node provides a incorrect result on chain, the honest node will always check the result and find it incorrect. At that time, the honest node can challenge the malicious one. The dispute game can force the honest node and the malicious node to pinpoint one concrete erroneous step. This step will be sent to the arbitration contract on the blockchain, and will find out the result provided by the malicious node is incorrect. Finally, the malicious node will be penalized and the honest node can get a corresponding reward. 
    \item \textbf{Liveness}: \emph{The Liveness of opML under the AnyTrust assumption means that any proposed result will be either accepted or rejected by the contract within a maximum period time.} Firstly, the instruction set and the execution traces of the opML program is finite, and can be done within $T_e$. Therefore, in the dispute game, the number of interactions is also finite. Therefore, the dispute game will end within a maximum period time $T_d$. If the provided result is correct, even in the worst case where $m-1$ malicious verifiers start the dispute game, the result will be accepted within $T_e + (m-1)\times T_d$. If the result is incorrect provided by a malicious verifier, the honest verifier will challenge it, even in the worst case where $m-2$ verifiers will start the dispute game to delay the actions of honest verifier, the result will be rejected within $T_e + (m-1)\times T_d$.
\end{itemize}

\para{Compared with ``AnyTrust'' and ``Majority Trust''}
In the following, we will conduct a brief security discussion between the system with ``AnyTrust'' and ``Majority Trust''.
As discussed above, opML is an ``AnyTrust'' system where any one honest validator can force opML to behave correctly. 
In contrast, systems like Layer 1 blockchains (e.g., Bitcoin \cite{nakamoto2008bitcoin} and Ethereum \cite{wood2014ethereum}) and Oracles (e.g., Chainlink \cite{breidenbach2021chainlink}) fall into the category of "Majority Trust" systems. These "Majority Trust" systems become susceptible to attacks when the number of malicious validators surpasses the threshold of $\lceil fm \rceil$, where $m$ represents the total number of validators, and $f$ denotes the Byzantine fault tolerance ratio. The specific value of $f$ varies with different consensus algorithms; for instance, it is 0.5 for Proof of Work (PoW) blockchains like Bitcoin. 
Consider a system with $m$ validators, and let $p$ denote the probability of a validator being malicious.
The probability that a ``Majority Trust'' system  is secure is 
\begin{equation}\nonumber
    P_{\text{majority-trust}} = \sum_{i=0}^{\lceil fm \rceil} {m \choose i} p^i(1-p)^{m-i} = ( m-k) {m \choose k} \int_0^{1-p} t^{n-k-1}(1-t)^k dt.
\end{equation}
The probability that a system with ``AnyTrust'' is secure is
\begin{equation}\nonumber
    P_{\text{any-trust}} = \sum_{i=0}^{m-1} {m \choose i} p^i(1-p)^{m-i} = 1 - p^m.
\end{equation}
It is evident that the "AnyTrust" system exhibits superior security compared to the "Majority Trust" system, specifically denoted as $P_{\text{any-trust}} > P_{\text{majority-trust}}$. Additionally, as illustrated in Figure \ref{fig:secure-probability}, the "AnyTrust" system attains a high level of security even with a limited number of nodes. For instance, even with a relatively high probability of 0.5 for a validator to be malicious (indicating that half of the validators can potentially be malicious), the "AnyTrust" system maintains a 99.9\% security probability with only 10 validators. In contrast, the "Majority Trust" system struggles to provide such a robust security guarantee under similar conditions.
\begin{figure*}
    \centering
    \subfigure[Configuration $p=0.4,f=0.5$]{
    \includegraphics[width=0.45\linewidth]{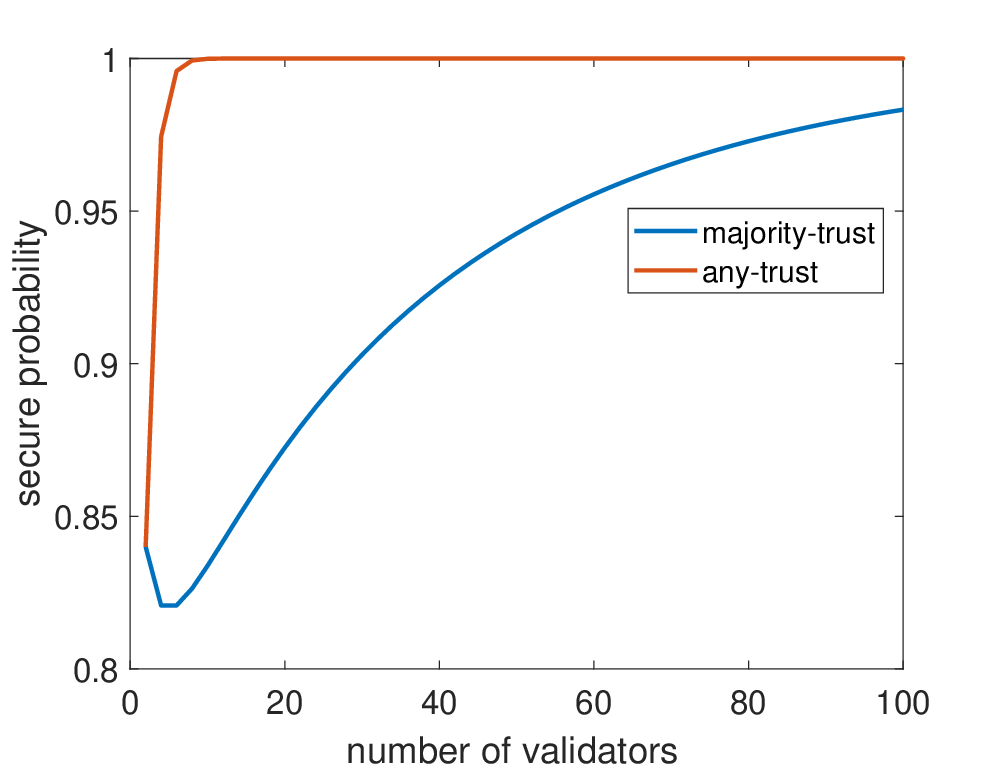}
    }
    \subfigure[Configuration $p=0.5,f=0.5$]{
    \includegraphics[width=0.45\linewidth]{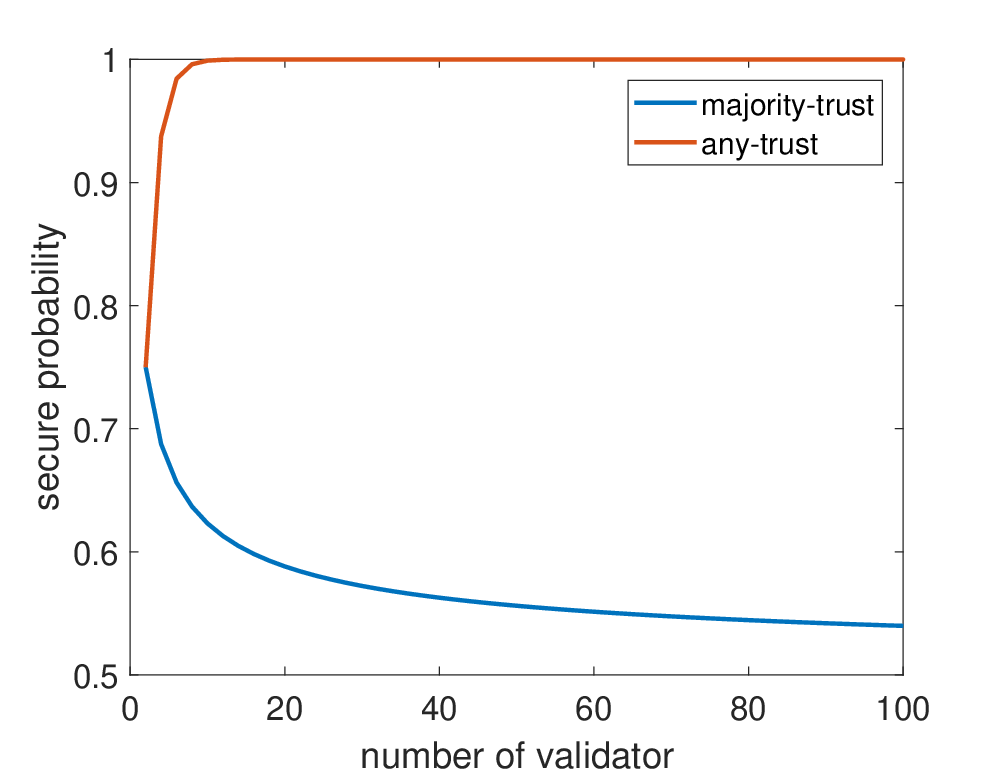}
    }
    \caption{Secure probability of ``Majority Trust'' and ``AnyTrust'' systems}
    \label{fig:secure-probability}
\end{figure*}


\section{Incentive Mechanism}\label{sec:incentive}

To achieve the safety and liveness of opML, we need to guarantee that at least one honest validator will check the results. Therefore, it is important to design an incentive-compatible mechanism where the rational validators will always check the results and the rational submitter will never cheat.

\subsection{Verifier Dilemma}

Like the optimistic rollup systems, opML can also suffer from the \textit{Verifier's Dilemma} \cite{luu2015demystifying}.
Here we consider a simple verification game $\mathbb{G}$.
Assume that all the validators (including the submitter) need to stake $S$ on chain, and the computation cost for opML execution is $C$. If the submitter provides an incorrect and nobody challenges him, he can benefit $B$ from cheating, but all the validator will suffer from a loss $L$. If the submitter provides an incorrect and it is successfully challenged by one validator, then the submitter will lose all his stake, and the validator can get the corresponding reward $R$. The action for the submitter is to cheat or not. And the action for the validator is to check or not. The payoff matrix is shown in Table \ref{table:verifier}, and the Nash equilibrium strategy \cite{roughgarden2010algorithmic} in this verification game is shown in the following theorem.

\begin{table}[h]
\begin{tabular}{|c|c|c|}
\hline
\textbf{validator/submitter} & \textbf{cheat} & \textbf{No cheat} \\ \hline
\textbf{validate}            & $(R - C, -S)$    & $(-C, -C)$          \\ \hline
\textbf{No validate}         & $(-L, B)$        & $(0,-C) $           \\ \hline
\end{tabular}
\caption{Payoff matrix in the verification game}
\label{table:verifier}
\end{table}

\begin{theorem}
    The mixed strategy Nash equilibrium of this verification game is that the submitter will randomly cheat with probability $p_c = \frac{C}{R+L}$, and the validator will randomly check with probability $p_v = \frac{B+C}{B+S}$.
\end{theorem}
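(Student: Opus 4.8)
The plan is to compute the mixed-strategy Nash equilibrium directly from the \emph{indifference principle}: at a fully mixed equilibrium, each player's randomization must leave the opponent indifferent among the pure strategies they mix over. I would write $p_c$ for the submitter's probability of choosing \textbf{cheat} and $p_v$ for the validator's probability of choosing \textbf{validate}, so the complementary probabilities are $1-p_c$ and $1-p_v$.

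First I would pin down $p_c$ using the validator's indifference. Reading the validator's payoffs from Table~\ref{table:verifier}, the expected payoff of \textbf{validate} against the submitter's mix is $p_c(R-C)+(1-p_c)(-C) = p_c R - C$, while the expected payoff of \textbf{No validate} is $p_c(-L)+(1-p_c)\cdot 0 = -p_c L$. Equating the two gives $p_c R - C = -p_c L$, hence $p_c(R+L)=C$, i.e. $p_c=\frac{C}{R+L}$, as claimed. Symmetrically, I would pin down $p_v$ using the submitter's indifference: the submitter's expected payoff from \textbf{cheat} is $p_v(-S)+(1-p_v)B = B - p_v(S+B)$, whereas the payoff from \textbf{No cheat} is $-C$ irrespective of $p_v$. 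Setting $B - p_v(S+B) = -C$ yields $p_v(S+B)=B+C$, and therefore $p_v=\frac{B+C}{B+S}$, matching the statement.

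The two-equation computation itself is routine; the step that deserves genuine care is \emph{justifying} that this interior profile is the relevant (indeed the unique) equilibrium rather than an artifact of equating payoffs. I would first rule out pure-strategy equilibria by scanning the four cells of Table~\ref{table:verifier} for a mutual best response and showing that, under the natural parameter assumptions, each player always has a strict incentive to deviate from any pure profile, which forces genuine mixing. I would then verify that the derived $p_c$ and $p_v$ both lie in $(0,1)$: this holds precisely when $0<C<R+L$ and $0<B+C<B+S$, the latter amounting to $C<S$ together with positive rewards, so the profile is well-defined.

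I expect the main (and only real) obstacle to be this well-posedness and uniqueness bookkeeping rather than the algebra. In particular, I would make explicit the implicit assumptions on $B,C,L,R,S$ under which the equilibrium is interior, and note that since each player is exactly indifferent between their pure actions at $(p_c,p_v)$, neither can strictly improve by deviating, which closes the argument that the stated profile is a Nash equilibrium.
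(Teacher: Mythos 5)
Your proposal is correct and follows essentially the same route as the paper: rule out pure-strategy equilibria, then impose the indifference conditions $p_cR - C = -p_cL$ and $B - p_v(S+B) = -C$ to obtain $p_c = \frac{C}{R+L}$ and $p_v = \frac{B+C}{B+S}$. Your additional care about the interiority conditions ($0 < C < R+L$ and $C < S$) is a reasonable refinement the paper leaves implicit, but the core argument is identical.
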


\begin{proof}
    According to the payoff matrix in Table \ref{table:verifier}, we can find that there is no dominant strategy for both validator and submitter. Therefore, there is no pure strategy Nash equilibrium in the verification game. We now turn to consider the mixed strategy one. Assume that the submitter will randomly cheat with probability $p_c$, and the validator will randomly check with probability $p_v$. Then for the mixed strategy Nash equilibrium, we have
    \begin{equation}\nonumber
    \begin{cases}
        p_c (R - C) + (1 - p_c) (-C) = - p_c L  & \\
        p_v (-S) + (1-p_v) B = -C & \\
    \end{cases}
    \end{equation}
    By solving the equations above, we have that $p_c = \frac{C}{R+L}$ and $p_v = \frac{B+C}{B+S}$.
    The proof is thus completed.
\end{proof}

There is no pure strategy Nash equilibrium in this verification game. This means that a rational submitter can randomly cheat, while the rational validator will also randomly check. This is not what we want. What's worse is if the submitter cheat randomly, with probability of cheating less than $c/(R+L)$, then a rational validator will never check, so the submitter will never get caught cheating. This is indeed the verifier dilemma, where there's no point in checking somebody's work if you know they're not going to cheat; but if you're not going to check then they have an incentive to cheat.

\subsection{Attention Challenge}

To solve the verifier dillema, we introduce the \textit{Attention Challenge} \cite{mamageishvili2023incentive} into opML.

In opML, we denote $x$ as the user's input to opML (corresponding to the initial state $S_0$), $f(\cdot)$ as the deterministic computation function, and $f(x)$ is the opML result (corresponding to the final state $S_n$). Instead of revealing $f(x)$ on chain, the submitter will first reveal $H(A_s, f(x))$ on chain, where $A_s$ is the address of the submitter and $H(\cdot)$ is the hash function. 

There exists a window of time, where the validator whose hash value $H(A_v, f(x)) < T$ should respond on chain, where $A_v$ is the address of the validator, and $T$ is a chosen threshold. After that, the submitter can post the opML result $f(x)$ on chain, which can be challenged as normal if any validator disagrees with it. The submitter can accuse any validator who responded incorrectly or did not respond. This accusation will be checked on chain and the validator who responded incorrectly or did not respond will be penalized by $G$. Note that the accusation will be valid only after the submitter's claim $f(x)$ is eventually accepted as correct. If any validator is penalized, the submitter will get half of the penalty $\frac{1}{2}G$, and the other half will be burned.

We denote $p_t = \Pr(H(A_v, f(x)) < T)$ as the probability that the validator needs to respond, which can be adjusted by chosen a proper $T$.
Then if the validator checks the results, then his utility is $p_c R - C$, where $p_c$ is the probability that the submitter checks and $C$ is the computation cost. If the validator does not check the results, he may be punished by no-responding when he needs to respond. In this case, the validator's utility is $-p_c \cdot L - p_t\cdot G$. In the following theorem, we can show that when $p_t \cdot G > C$, the rational validator will always check, and the rational submitter will never cheat. 

\begin{theorem}
    When $p_t \cdot G > C$, the only Nash equilibrium of the verification game with attention challenge mechanism is that validator will always check and the submitter will never cheat.
\end{theorem}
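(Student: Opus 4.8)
The plan is to show that, once the attention-challenge penalty is in force, ``check'' becomes a strictly dominant action for the validator, and then to read off the submitter's unique best response. This converts the mixed-strategy analysis of the previous theorem into a pair of pure-strategy dominance arguments, which simultaneously yields existence and uniqueness.

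First I would write down the two relevant expected utilities for the validator as functions of the submitter's cheating probability $p_c$. Incorporating the attention-challenge penalty into the payoff matrix of Table~\ref{table:verifier}, checking yields $U_{\text{check}} = p_c R - C$ (the validator collects $R$ exactly when it catches a cheating submitter, and always pays the cost $C$), while not checking yields $U_{\text{no}} = -p_c L - p_t G$ (it suffers the loss $L$ when the submitter cheats undetected, and is fined $G$ whenever it was obliged to respond, which happens with probability $p_t$). I would then form the gap $U_{\text{check}} - U_{\text{no}} = p_c(R+L) + (p_t G - C)$.

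The key step is the dominance argument. Since $p_c(R+L) \ge 0$ for every $p_c \in [0,1]$, the hypothesis $p_t G > C$ forces $U_{\text{check}} - U_{\text{no}} > 0$ uniformly in $p_c$. Hence ``check'' strictly dominates ``no check'' regardless of the submitter's play, so $p_v = 1$ in every Nash equilibrium; in particular no equilibrium in which the validator randomizes can survive. Fixing $p_v = 1$, I would compare the submitter's two rows of the payoff matrix: cheating returns $-S$ (it is always caught) while honest behavior returns $-C$. Invoking $S > C$ — the standing assumption that the posted stake exceeds the computation cost, which is also what makes $p_v = (B+C)/(B+S) \le 1$ meaningful in the previous theorem — honest play strictly dominates, so $p_c = 0$. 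A final consistency check at $(p_v,p_c) = (1,0)$ confirms that checking is still strictly preferred there, since $-C > -p_t G$ under the hypothesis, closing the equilibrium.

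The main obstacle is not any heavy calculation but pinning down the side conditions so that the dominance is genuinely strict and uniform. I must ensure the attention penalty $-p_t G$ is correctly attributed to not-checking (a validator that never computes $f(x)$ cannot respond correctly when selected), and I must make explicit the background assumption $S > C$ on which the submitter's half of the argument silently relies. With those in place, the two strict-dominance steps chain together to give the unique equilibrium in which the validator always checks and the submitter never cheats.
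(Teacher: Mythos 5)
Your proposal is correct and follows essentially the same route as the paper: compute the validator's utility gap between checking and not checking, observe that $p_t G > C$ makes checking strictly dominant uniformly in $p_c$, and then read off the submitter's best response. Two small points in your favor: your gap $p_c(R+L) + p_t G - C$ is the correct algebra (the paper writes $p_c(R+C)$, apparently a typo, though the sign conclusion is unaffected since both coefficients are nonnegative), and you rightly make explicit the side condition $S > C$ that the submitter's half of the argument needs and that the paper leaves implicit.
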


\begin{proof}
    The utility for the validator to check the results is $U(\text{check}) = p_c R - C$, and the utility for the validator to be lazy and not check the results is $U(\text{lazy}) = -p_c \cdot L - p_t\cdot G$. When $p_t \cdot G > C$, we have that
    \begin{equation}\nonumber
        \begin{aligned}
            U(\text{check}) - U(\text{lazy}) = p_c(R+C) + p_tG-C >0, \quad \forall p_c \in [0,1]
        \end{aligned}
    \end{equation}
    Therefore, no matter what probability the submitter cheats, the dominant strategy for the validator is to always check. 
    Given that the validator will always check, the dominant strategy for the submitter is to always provide a correct result.
    The proof is thus completed.
\end{proof}


Besides, the cost of introducing the attention challenge mechanism in opML is very low. Assume that the interest rate of locking up the stake $G$ is $r$, and the gas fee for the validator to respond is $t$. Then the extra cost for the attention challenge mechanism in opML is $rG + tp_t$. The minimal cost is $rG + tp_t \geq 2\sqrt{rtC}$ when $G = \sqrt{\frac{tC}{r}}$ and $p_t = \sqrt{\frac{rC}{t}}$.

Specifically, as shown in \cite{a16z}, A GPT-3.5 inference takes approximately 1 second on an A100 with the raw computation cost \$0.001 for 1,000 tokens (compared to OpenAI’s pricing of \$0.002/1000 tokens \cite{openai}), i.e., $C = \$0.001$. With the interest rate $r=0.001$ and the gas fee $t=\$1$, we can set the probability $p_t = 0.1\%$ and the validator needs to deposit $G=1$. The total cost for the attention challenge per opML request is only $\$0.002$. 
\section{Discussion}\label{sec:dis}

\subsection{Optimization with Multi-Sect Dispute Points and Multi-Step zkVM}

Furthermore, we can utilize multiple checkpoints and multi-step ZK verifier to provide a zk-fraud proof to speed up the finality of opML. Assuming that there are $N$ steps in the opML fraud proof VM. Then the number of interactions would be $O(\log_2(N))$. One optimization approach to reduce to the interactions is to use a multi-step on-chain arbitrator, instead of only one-step on-chain arbitrator.

This can be done by using a zkVM, e.g., zkWASM \cite{gao2022zawa}, RICS Zero \cite{risv0}, Cairo VM \cite{goldberg2021cairo}, which can provide zk-proof for the multi-step execution. Suppose the zkVM can generate a zk-proof for $m$ steps, then the number of interactions can be reduce from $O(\log_2(N))$ to $O(\log_2(N/m))$. Besides, instead of using bisection protocol with only 1 checkpoint, we can provide multiple checkpoints (for example $k$ ($k >1$) checkpoints) in the dispute game (verification game). At this time, we can further reduce the interactions number from $O (\log_2(N/m))$ to $O( \log_{k+1}(N/m))$. With these optimizations, we can greatly speed up the finality of opML, and it is possible to achieve almost instant confirmation!


\subsection{Training/Fine-tuning in opML}

In opML, the use of a fraud-proof VM plays a pivotal role in ensuring the correctness of ML results. This specialized VM has the capability to support general computations, encompassing both inference and training tasks. The training and fine-tuning process can also be conceptualized as a series of state transitions when model parameters undergo deterministic updates.

Specifically, the inference phase of a DNN model involves straightforward forward computation on the DNN computation graph, denoted as $G$. On the other hand, the training process encompasses both forward computation and backward update (backpropagation) on the same DNN computation graph $G$. While the tasks differ in their objectives, the computation processes for forward computation and backward update are similar, enabling a unified approach.

Through the integration of a multi-phase opML approach, we can extend support to the training process efficiently as well. Here’s how it works: The dispute game initiates during each iteration of the training process, to locate a dispute within a specific iteration. Subsequently, the process advances to the next phase, where the submitter and challenger engage in a dispute protocol on the computation graph for both forward and backward processes. This allows for the pinpointing of the dispute node, and the computation related to this node is then forwarded to the subsequent phase, where the fraud-proof VM arbitrates the issue.

opML's extension to the training process has significant implications for verifying ML model generation on the blockchain. By utilizing on-chain data to train and update the ML model, opML ensures that the process is auditable and transparent. Moreover, opML's integration with the training process provides an effective means to validate ML model updates, safeguarding against potential backdoors \cite{hong2022handcrafted} and ensuring the model's integrity and security.

\subsection{Combining zkML and opML for Enhanced Privacy}

In contrast to zkML, opML lacks intrinsic privacy features. However, a compelling approach to augment privacy is by integrating the strengths of both opML and zkML. One feasible strategy involves leveraging zkML for input processing in the early layers of the model and subsequently employing opML for the remaining layers. This amalgamation allows us to attain a nuanced form of privacy—while not achieving full zk-level privacy, it is possible to ensure a secure and non-reversible level of obfuscation. A crucial consideration in this integration is determining the optimal number of layers to implement through zkML, striking a delicate balance between computational efficiency and heightened security.

\section{Related Work}\label{sec:relate}

\textbf{AI computation on blockchain.} 
The concept of AI computation on the blockchain has garnered attention from both academia and industry. In the academic realm, Das et al. have suggested the potential for "Computationally Intensive Smart Contracts" that encompass machine learning tasks \cite{yoda}. Teutsch et al. have explored the application of "Autonomous machine learning" in the context of TrueBit \cite{truebit}, while Wüst et al. have recognized the current limitations of smart contracts for machine learning applications \cite{ace}. 
Within the industry, Microsoft has put forth the concept of "Decentralized and Collaborative AI" on blockchain platforms \cite{microsoft}. 
Several startups, including Cortex \cite{cortex}, SingularityNET \cite{singularitynet}, Algorithmia \cite{algorithmia} have also embraced this vision.
However, the existing proposals typically demonstrate only limited AI computations, achieving consensus on a small number of nodes \cite{microsoft,algorithmia,singularitynet,cortex} or provide insufficient technical details \cite{yoda,truebit,ace}. These approaches lack scalability, rendering them impractical for real-world deep neural network computations within Ethereum smart contracts.

\textbf{Zero-Knowledge Machine Learning (zkML).} 
In \cite{liu2021zkcnn}, the authors provide zero-knowledge proof for an CNN model. In \cite{zhang2020zero}, the authors develop approaches to efficiently turn decision tree predictions and accuracy into statements of zero knowledge proofs. In \cite{wang2022ezdps}, the authors propose ezDPS: an efficient and zero-knowledge machine learning inference pipeline. In \cite{xing2023zero}, the authors propose a Zero-Knowledge Proof-based Federated Learning (ZKP-FL) scheme on blockchain.
In \cite{weng2021mystique}, the authors propose Mystique, an efficient conversions for zero-knowledge proofs with applications to machine learning, which can improve the execution time by a factor by 7$\times$.
In \cite{garg2023experimenting}, the authors propose zero-knowledge proof of training (zkPoT). However, due to its huge cost, zkPoT only works for the logistic regression, and can not be applied to some larger models. 


Our work also resembles and references the work on Agatha \cite{zheng2021agatha}. Agatha system is developed by Microsoft Research, which demonstrates the practical contract verification for DNN computation on public Ethereum. Agatha adopts the fraud proof to verify the computation results. Instead of using fraud proof VM, Agatha adopts the graph-based pinpoint protocol (GPP) which enables the pinpoint protocol on computational graphs.
Since Agatha is not open-sourced, we can not compare the performance of Agatha and opML directly.
Generally, our work distinguishes from Agatha \cite{zheng2021agatha} and the existing works in the following perspectives:
\begin{itemize}
    \item Agatha does not support Turing-complete computation. Agatha enables the native DNN computation by the graph-based pinpoint protocol (GPP) instead of VM-based pinpoint protocol.  
    opML is designed to enable Turing-complete computation. 
    \item With our Turing-complete fraud-proof VM, opML can handle various aspects of ML, including model inputs and outputs process, and can support various ML models. This versatility allows opML to support ML training and fine-tuning, a feature that Agatha lacks. For applications like LLaMA that require data processing, multiple inferences, and intermediate result storage, going beyond typical DNN model calculation, where a Turing-complete VM is essential. opML excels in this context, outperforming Agatha.
    \item With our Turing-complete fraud-proof VM, opML extends its capabilities beyond deep neural network (DNN) computation. It accommodates various machine learning algorithms, including K-nearest neighbors (KNN) \cite{soucy2001simple}, decision trees \cite{song2015decision}, random forests \cite{cutler2012random}, and more, enhancing its versatility and potential for a broader range of AI applications.
\end{itemize} 

\section{Conclusions}\label{sec:conclusion}

In conclusion, the optimistic approach to on-chain AI and machine learning brings substantial advantages over existing methods, enabling the fusion of AI power with the integrity and security of the blockchain. opML excels by delivering a cost-effective and efficient ML service compared to zkML. This positions opML as a key player in reshaping the future of decentralized, secure, and transparent AI services. As the landscape of on-chain AI evolves, opML stands as a pivotal solution, unlocking the full potential of these technologies and ensuring a transformative journey toward accessible, secure, and efficient on-chain machine learning.



\bibliographystyle{ACM-Reference-Format}
\bibliography{reference,zkml}

\end{document}